\newtheorem{theorem}{Theorem}
\newtheorem{mytheorem}{Theorem}
\newtheorem{lemma}[mytheorem]{Lemma}
\newtheorem{remark}[mytheorem]{Remark}
\begin{document}

\title{Coexisting Hidden and self-excited attractors in an economic system of integer or fractional order}

\author{Marius-F. Danca\\
Romanian Institute of Science and Technology, \\Cluj-Napoca, Romania\\
email: danca@rist.ro}

\maketitle

\begin{abstract}In this paper the dynamics of an economic system with foreign financing, of integer or fractional order, are analyzed. The symmetry of the system determines the existence of two pairs of coexisting attractors. The integer-order version of the system proves to have several combinations of coexisting hidden attractors with self-excited attractors. Because one of the system variables represents the foreign capital inflow, the presence of hidden attractors could be of a real interest in economic models. The fractional-order variant presents another interesting coexistence of attractors in the fractional order space.
\end{abstract}

\textbf{Keywords }Hidden attractor; self-excited attractor; coexisting attractors; saddle; economic system

\section{Introduction}

According to Shilnikov criteria, chaos emergence requires at least one unstable equilibrium \cite{patru}. So, if for some values of the key parameter the equilibria are unstable and one tries to generate attractors with initial conditions near these equilibria, one obtains chaotic attractors and could deduce that the system evolves chaotically. On the other side, this conclusion could be incomplete, or even false, since for initial conditions taken from an attraction basin which does not intersect with any small neighborhoods of unstable equilibria, the underlying trajectories could lead to some other attractor that might be a regular motion, but not a chaotic motion. Recently this problem has been managed by introducing the term of hidden attractor (see e.g. \cite{cinci}, one of the first works on this subject, or \cite{noua}). If the attraction basin intersects with any
open neighborhoods of an equilibrium the attractor is called self-excited, otherwise, it is called hidden. Usually, the self-excited attractors are numerically derived from unstable equilibria, while hidden attractors are difficult to be localized, because their attraction basins have no relation with small neighborhoods of any equilibria. Hidden attractors can be found in a nonlinear system with one or more stable equilibria \cite{sase,sase_1,sase_2,sase_3}, with a line of equilibria (infinite equilibria) \cite{sapte}, with both unstable
equilibria and stable equilibria \cite {zece}, with coexistence of various attractors \cite{unspe}, or even without equilibria \cite{opt}.
Note that there is a lack of correlation between hidden attractors and equilibria. Moreover, the precise localization of hidden attractors seems to be an intractable issue and, to our best knowledge, there is still no general analytical way, but only numerically with luck, to find hidden attractors (for the particular case of Chua's circuit, see \cite{paispe}).

To understand better the importance of hidden attractors consider the crash of aircraft YF-22 Boeing in April 1992. The analysis of aircrafts and launchers control systems with saturation
regards the linear stability and also hidden oscillations which might occur \cite{doispe}. The difficulties of rigorous analysis and design of nonlinear control systems with saturation related to this case are presented in \cite{treispe}.

Therefore, the identification of hidden attractors in a given system can help drive the system along the desired attractor, which can be a self-excited attractor or a hidden attractor.

The fractional order (FO) calculus is as old as the integer-order (IO) one, but its application was exclusive in mathematics. In the latter years FO systems where proved to describe, generally more accurately and in compact expressions, the behavior of real dynamical systems, compared to the IO models. This happens taking into account the nonlocal characteristic of ``infinite memory'', i.e. the evolution of the system depends at each moment on the entire previous history. Basic aspects of the theory of FO can be found in \cite{ppp}. While the definition of FO derivative for continuous-time real functions has been formulated by Liouville, Grunwald, Letnikov, and Riemann, in the late 19th century, the first definition of a fractional difference operator was proposed in 1974 \cite{xxx}.
In this paper the FO derivative is considered in the sense of Caputo, because it allows the choosing initial conditions for the IO systems. One of the most utilized numerical methods to integrate continuous systems of FO is the predictor-corrector Adams-Bashforth-Moulton predictor-corrector method for fractional-order differential equations \cite{21}.

There are many continuous but also discrete chaotic systems of IO, such as the supply and demand systems which are among the oldest and simplest economic discrete models, having    complicated dynamics \cite{unu}. It is proved that the economical models exhibit generally unstable steady states and also fluctuations if the income distribution varies sufficiently and if shareholders
save more than workers (see e.g. \cite{doi} and references therein).

In this paper a continuous economic system is considered, and proved numerically that it presents coexisting attractors for both IO and FO variants.

\section{The economic system}

In \cite{doi} the question of whether using and/or un-using of foreign investments can change the qualitative properties is addressed for a growth path of a proposed dynamical economic
system with three endogenous variables and only one non-linear term, modeled by 3-dimensional autonomous differential equations, while in \cite{trei} the system is further analyzed via bifurcations routes including supercritical and subcritical Hopf bifurcation, and generalized Hopf bifurcation as well. It is proved that the existing cycle exhibits period-doubling
bifurcation as a route to chaos.

The economic system considered in this paper, with a foreign financing, has the following form

\begin{equation}\label{1}
\begin{array}
[l]{l}
\dot{x}_1=ax_2+bx_1(c-x_2^2),\\
\dot{x}_2=d(x_1+x_3),\\
\dot{x}_3=    ex_1-fx_2,
\end{array}
\end{equation}
where the variables $x_1$ represents the savings of households, $x_2$ the Gross Domestic Product (GDP), $x_3$ the foreign capital inflow, and the variables $a$ the variation of the marginal propensity to savings, $b$ the ratio of capitalized profit, $c$ the potential GDP, $d=1/\nu$, with $\nu$ the capital/output ratio, $e$ the capital inflow/savings ratio and $f$ the debt refund/output ratio \cite{unu}. As explained in \cite{unu}, if $x_2 <c$, the activities are not constrained and high profits are derived from the sectors where the markets are not yet saturated. The single non-linearity represents the quality of a capitalization of the profits. If $x_2>c$, then the inflation is possible. A lack of new investment opportunities modifies it by savings.

Note that the system can also be viewed as an extension of the van der Pol's system on the plane $(x_1,x_2)$, i.e. when economically the system has no foreign investment ($x_3=0$)
\begin{equation}\label{2}
\begin{array}
[l]{l}
\dot{x}_1=ax_2+bx_1(c-x_2^2),\\
\dot{x}_2=dx_1,
\end{array}
\end{equation}
which, as a two-dimensional autonomous system, cannot display chaotic behavior.

In this paper one considers the 3-dimensional system \eqref{1} with $a$ the bifurcation parameter and $b=0.01$, $c=1$, $d=0.031847$, $e=0.19$, $f=0.25$ (there are other interesting possible parameters choices\footnote{Parameter $d$ has been chosen as in \cite{doi}, with six decimals, because rounding to fewer decimals affects significantly the results.})

\begin{equation}\label{3}
\begin{array}
[l]{l}
\dot{x}_1=ax_2+0.01x_1(1-x_2^2),\\
\dot{x}_2=0.031847(x_1+x_3),\\
\dot{x}_3=    0.19x_1-0.25x_2.
\end{array}
\end{equation}

While in \cite{doi} coexisting stable cycles and equilibria and also the existence of a chaotic attractor are studied analytically, in this paper it is shown numerically that for $a<f$ the system presents richer dynamics, including coexisting hidden and self-excited chaotic attractors and stable cycles. This behavior could be extremely interesting from an economical point of view.

Because the right hand side of the system \eqref{1} is an odd function, the system presents symmetries in the bifurcation planes, and also in the phase space.

The equilibria, which are collinear (see Fig. \ref{fig4}), have the following expression
\[
X_0^*=(0,0,0),~~X_{1,2}^*=\Bigg(\pm\sqrt{\frac{bf^2+a e f}{b e^2}},\pm\sqrt{\frac{bf^2+a e f}{b f^2}}, \mp\sqrt{\frac{bf^2+a e f}{b e^2}} \Bigg)
\]

Note that $X_{1,2}^*$ with the considered 4 decimals, are approximations of the exact equilibria $X_{1,2}^*$.

The stability of equilibria and the occurrence of the Hopf bifurcation for $a\geq 0.25$ are studied in \cite{doi}.

In this paper, the bifurcation parameter $a$ is considered, $0\leq a \leq0.2$, where all equilibria are unstable and the system presents most interesting dynamics.

The following result, presented in \cite{doi,trei}, is slightly improved and re-proved numerically here, to be useful for finding hidden attractors later.

\begin{lemma}\cite{doi,trei} For $a\in[0,0.2]$, equilibria $X_{0,1,2}^*$ are hyperbolic unstable saddles. Equilibrium $X_0^*$ is attracting focus-saddle and $X_{1,2}^*$ repelling focus-saddle.
\end{lemma}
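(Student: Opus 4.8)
The plan is to classify each equilibrium by examining the Jacobian of system \eqref{1} evaluated at the equilibrium, and then show that its eigenvalues place every equilibrium in the stated category. First I would compute the Jacobian
\[
J(x_1,x_2,x_3)=\begin{pmatrix} b(c-x_2^2) & a-2bx_1x_2 & 0\\ d & 0 & d\\ e & -f & 0\end{pmatrix},
\]
and evaluate it at the three equilibria using the fixed numerical parameters $b=0.01$, $c=1$, $d=0.031847$, $e=0.19$, $f=0.25$, with $a$ ranging over $[0,0.2]$. The characteristic polynomial is a cubic $\lambda^3+p_1(a)\lambda^2+p_2(a)\lambda+p_3(a)$ whose coefficients are explicit functions of $a$ (and, for $X_{1,2}^*$, of the equilibrium coordinates, which themselves depend on $a$ through the formula given above).

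Next I would analyze the eigenvalue structure. For a focus-saddle one expects a single real eigenvalue together with a complex-conjugate pair; \emph{attracting} focus-saddle means the real eigenvalue is negative and the complex pair has positive real part (or some mixed sign configuration giving a saddle with spiral behavior), while \emph{repelling} focus-saddle reverses the relevant signs. The cleanest way to certify this over the whole interval is to show that the discriminant of the cubic stays negative (guaranteeing one real and two complex roots) throughout $a\in[0,0.2]$, and then to track the signs of the real parts. For $X_0^*$ the Jacobian simplifies considerably since $x_1=x_2=0$, giving a matrix with $b c=0.01$ in the top-left entry and $a$ in the $(1,2)$ entry, so the coefficients $p_i(a)$ are elementary polynomials in $a$; I would verify the sign pattern of the real parts directly. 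For $X_{1,2}^*$, by the odd symmetry of the vector field the Jacobians at $X_1^*$ and $X_2^*$ are conjugate (indeed identical in spectrum), so the two equilibria share eigenvalues and need be treated only once.

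The certification over a continuum of $a$ is where care is needed, and it is the step I expect to be the main obstacle. Rather than attempt a purely symbolic argument, I would follow the paper's stated intention and verify the claim numerically on a fine grid of $a$-values in $[0,0.2]$, computing the eigenvalues at each equilibrium and confirming the hyperbolicity (no eigenvalue with zero real part) together with the asserted attracting/repelling focus-saddle signatures. To make this rigorous I would check the two genuine threats to the constant sign pattern: a real eigenvalue crossing zero, which is detected by the vanishing of the constant term $p_3(a)=-\det J$, and a Hopf-type crossing where the complex pair touches the imaginary axis, which is detected by the Routh--Hurwitz condition $p_1(a)p_2(a)-p_3(a)=0$. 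I would show that neither of these vanishes on $[0,0.2]$ (consistent with the remark that the Hopf bifurcation occurs only at $a\geq 0.25$), so continuity of the eigenvalues forbids any sign change and the signature established at a single sample point, say $a=0$, persists across the entire interval. This reduces the continuum claim to a single spectral computation plus two sign-nonvanishing checks, which is exactly the ``slightly improved and re-proved numerically'' approach the text promises.
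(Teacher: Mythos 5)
Your proposal is correct and follows the same basic route as the paper: form the Jacobian, write the characteristic cubic at each equilibrium, exploit the odd symmetry so that $X_1^*$ and $X_2^*$ share a spectrum, and verify the eigenvalue signature numerically over $a\in[0,0.2]$. Where you differ is in how the claim is certified over the whole interval. The paper combines Descartes' rule of signs applied to the explicit cubic at $X_0^*$ (to constrain the number of positive real roots), a citation to the Hurwitz-criterion argument of \cite{doi} for $X_{1,2}^*$, and plotted eigenvalue loci as functions of $a$. You instead propose a single spectral computation at one sample value plus two non-vanishing checks --- $p_3(a)=-\det J\neq 0$ (no real eigenvalue crosses zero) and $p_1(a)p_2(a)-p_3(a)\neq 0$ (no Hopf-type crossing of the complex pair) --- and then invoke continuity of the spectrum. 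This is arguably cleaner and more rigorous than inspecting loci on a grid, and both checks do go through here: for $X_0^*$ one gets $p_3=-0.006a-0.0001<0$ and $p_1p_2-p_3=0.00002+0.006318a>0$ on $[0,0.2]$, and for $X_{1,2}^*$ the analogous quantities are likewise bounded away from zero, consistent with the Hopf bifurcation occurring only for $a\geq 0.25$.

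One point to fix: you state that an \emph{attracting} focus-saddle has a negative real eigenvalue and a complex pair with positive real part. In the paper's (standard) usage it is the reverse --- the qualifier refers to the focus (spiral) part. Thus $X_0^*$, the attracting focus-saddle (spiral saddle of index 1), has one \emph{positive} real eigenvalue and a complex pair with $\Re(\lambda_{2,3})<0$, while $X_{1,2}^*$, the repelling focus-saddles (index 2), have one negative real eigenvalue and a complex pair with $\Re(\lambda_{2,3})>0$. You hedge on the convention, and the computation itself would reveal the correct signs, but as written the labels would come out swapped; with that convention corrected your argument matches the stated classification.
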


\begin{proof}

The stability is proved numerically by analyzing the eigenvalues of the Jacobi matrix which has the following form
\[
J=\left(
\begin{array}{ccc}
0.001-0.001x_2^2 & a-0.02x_1x_2 & 0\\
0.031847  0 & 0 & 0.031847\\
0.190 & -0.250 & 0%
\end{array}%
\right).
\]

\begin{itemize}
\item[i)]

The characteristic equation $P(\lambda)=0$, $\lambda\in \mathbb{R}$, where $P(\lambda)$ is the characteristic polynomial, for $X_0^*$, has the following form
\[
P(\lambda)=\lambda^3-0.0100\lambda^2+(0.0080-0.0318a)\lambda-0.0060a-0.0001.
\]

Because, for $a>0.008/0.0318\approx 0.25$, the coefficient $0.008 - 0.0318a<0$ and there exists a single change in the sign of the coefficients of $P(\lambda):+,-,-,-$. Therefore, by Descartes' rule, there exists a single real positive zero of $P(\lambda)$, $\lambda_1$. For $a\leq0.2$, the coefficient $0.008 - 0.0318a>0$ and the coefficients sign are: $+,-,+,-$, i.e. three sign changes, which means that the polynomial $P(\lambda)$ has either 3 or 1 positive zeros. In order to understand better the characteristics of $X_0^*$, in Fig. \ref{fig1} (a) the three roots of $P(\lambda)$, $\lambda_{1,2,3}$, are plotted as function of $a$, for $a\in[0,2]$. As can be seen, the real eigenvalue $\lambda_{1}$ is positive for all considered values of $a$ and the other two eigenvalues $\lambda_{2,3}$ are complex with $\Re(\lambda_{2,3})<0$. Therefore, $X_0^*$ is spiral saddle of index 1 (or attracting focus saddle). This means that trajectories will leave this equilibrium along the unstable manifold of dimension 1 (generated by the real positive eigenvalue $\lambda_1$), by spiralling, due to the stable manifold of dimension 2 (generated by $\Re(\lambda_{2,3})<0$).

\item[ii)]
For equilibria $X_{1,2}^*$, the characteristic polynomial is

\[
P(\lambda)=\lambda^3+0.0076\lambda^2+(0.0090-0.0318a)\lambda+0.0121a+0.0002.
\]

The proof of instability of $X_{1,2}^*$, via Hurwitz stability criterion, can be found in \cite{doi}.
To verify it numerically, in Fig. \ref{fig1} (b) the zeros of $P(\lambda)$ are plotted as function of $a$ for $a\in[0,2]$. As can be seen, for all $a\in[0,2]$, there exist two complex eigenvalues with positive real parts and one real negative eigenvalue. Therefore $X_{1,2}^*$ are spiral saddle of index 2 (or repelling focus saddle). Trajectories near $X_{1,2}^*$ are attracted along the stable manifold of dimension 1 (generated by the real negative eigenvalue $\lambda_1$) and then are rejected spiraling on the unstable manifold of dimension 2 (due to the positiveness of the real part of $\lambda_{2,3}$, $\Re(\lambda_{2,3})>0$).
\item[iii)] Because for all equilibria $\Re(\lambda_{2,3})\neq 0$, $a\in[0,0.2]$, $X_{0,1,2}^*$ are hyperbolic.
\end{itemize}
\end{proof}
Fig. \ref{fig1} (c) presents a chaotic attractor for $a=0.001$. As can be seen, before reaching the chaotic attractor, the trajectory connects the saddles $X_0^*$ and $X_1^*$, i.e. it is a heteroclinic connection.

\section{Hidden and self-exited attractors}

Hereafter attractors are denoted as follows:
\begin{itemize}
\item[] Self-Excited Cycles: $SEC$;
\item[] Self-Excited Chaotic Attractors: $SECH$;
\item[] Hidden Chaotic Attractors: $H$;
\item[] Hidden Cycles: $HC$.
\end{itemize}

The numerical method utilized for the integration of the system \eqref{3} of IO is the Matlab variant of RK method, $ode45$.

The numerical trajectories for different attractors are represented overplotted in the phase space, as time series and the local Lyapunov exponents are determined. Generally, transients are removed.
Also, excluding the Perron effect (when a negative largest Lyapunov exponent does not, in general, indicate stability, and also when a positive largest Lyapunov exponent does not, in general, indicate chaos \cite{per,per2}), the positiveness of the maximal Lyapunov exponent (MLE) is considered as an evidence for chaos. In this paper the zero MLE is considered as having a maximum value of order $10^{-5}-10^{-4}$. If $MLE$ is zero, then, after transients removed, the trajectory is considered as regular (stable cycle), while if $MLE>0$ the trajectory is considered chaotic.

The bifurcation diagram of the maximum component $x_1$, for $a\in[0,0.2)$, is presented in Fig. \ref{fig2} (a).

As can be seen, for all values of $a$, the underlying attractors are symmetric. Due to the symmetry, the coexistence of attractors features this system. Thus, depending on the initial condition $x_0$, there exists one or even two pairs of symmetric attractors (blue-red and black-yellow in the bifurcation diagram).
Beside the standard period-doubling cascade which leads to chaos, the zoomed region in Fig. \ref{fig2} (b) reveals an interesting coexisting window for $a\in[a_1,a_2]$, with $a_1=0.0485$ and $a_2=0.0524$, where exterior and interior attractors crises, thin periodic windows interleaved between large coexisting chaotic-periodic windows, and also several miniature bifurcation diagrams within a large chaotic window can be seen. Moreover, at $a=a_2$, there exist both exterior and interior crisis.

However the most interesting characteristic of the window presented in Fig. \ref{fig2} (b) represents the coexistence of self-excited chaotic attractors with hidden chaotic attractors.

Because there exists no general analytical way to find the attraction basin of hidden attractors of a given nonlinear system, in the case when the considered system admits unstable equilibria, an acceptable way is to search randomly initial conditions in neighborhoods of all these equilibria. This can be done either in planar (usually horizontal) sections through each equilibrium or/and analyzing three-dimensional (usually spherical) neighborhoods centered on equilibria.

For the considered system \eqref{3}, one considers lattices $B_{0,1}=[-5,5]\times [-5,5]$ of $400\times 400$ points $(x_1,x_2)$ and zoomed regions, as horizontal sections through $X_0^*$ (with $x_3=0$) and $X_1^*$ (with $x_3$ the third coordinate of $X_1^*$), respectively, and also spherical neighborhoods of unstable equilibria.

Because of the symmetry, the case of the equilibrium $X_2^*$ is similar to $X_1^*$. Therefore only $X_0^*$ and $X_1^*$ are analyzed.

\subsection{Coexisting self-excited chaotic attractors and hidden chaotic attractors}\label{self-hid}

\underline{$a=0.052$}. For this value, $X_1^*(2.9280,2.2253,-2.9280)$ and the system presents two pairs of different chaotic attractors,  $SECH_1$, $H_1$ and $SECH_2$, $H_2$ respectively (see Fig. \ref{fig3} (a) for $H_1$ and $H_2$ and Fig. \ref{fig3} (b) for both pairs of hidden attractors and self-excited attractors). The extensive numerical experiments revealed that initial conditions within small neighborhoods of unstable equilibria $X_0^*$ and $X_{1,2}^*$ launch trajectories to \emph{self-excited chaotic attractors} $SECH_{1,2}$ (blue and light brown). Initial conditions outside the attraction basins of the self-excited attractors $SECH_{1,2}$ lead to the other two chaotic attractors, $H_{1,2}$ (red and black), which are \emph{hidden chaotic attractors}. To verify that, one can explore planar neighborhoods of each equilibria in lattices $B_0$, containing $X_0^*$ and defined by $x_3=0$, and $B_1$, containing $X_1^*$ and defined by $x_3=-2.928$ (Fig. \ref{fig4} (a)). All points in $B_{0,1}$ are considered as initial conditions, to see where the underlying trajectories go. As the numerical analysis shows, the attraction basins of $H_1$ and $H_2$ on $B_0$ do not contain (intersect) the equilibrium $X_0^*$ (see the black and red plot in Figs. \ref{fig4} (b), (c)). Similarly, these attraction basins do not intersect the equilibrium $X_1^*$ on $B_1$ (Figs. \ref{fig4} (d), (e)). Therefore $H_{1,2}$ are hidden attractors.

Note that $X_0^*$ is on the separatrix between the attraction basins of $SECH_1$ and $SECH_2$ and, therefore, every (no matter how small) neighborhoods around $X_0^*$ (Fig. \ref{fig4} (c)) share initial conditions toward $SECH_1$ or $SECH_2$. This situation is typical for this system and also for saddles generally. Therefore, $SECH_{1,2}$ are self-excited.

In addition to planar numerical analysis, a three-dimensional test around $X_0^*$ and $X_1^*$ is done by analyzing initial conditions within a sphere surrounding these equilibria. For clarity, 50 random points as initial conditions are chosen. Figs. \ref{fig5} (a), (b) presents the case of equilibrium $X_0^*$ with the neighborhood $V_{X_0^*}$. As can be seen, due to the fact that $X_0^*$ is placed on the separatrix, the neighborhood $V_{X_0^*}$ shares initial conditions for attractors $SECH_1$ and $SECH_2$.

In Figs. \ref{fig5} (c), (d), within a spherical neighborhood $V_{X_1^*}$, 50 initial conditions are all leading to $SECH_1$.

Therefore, because the chaotic attractor $H_{1}^*$ is not connected with any equilibria $X_0^*$ and $X_1^*$, it can be considered hidden.

The self-excited attractors $SECH_{1,2}$ have $MLE= 0.0025$, while for the attractors $H_{1,2}$, $MLE=0.0051$ and, therefore, they are chaotic.

\subsection{Coexisting self-excited cycles and self-excited chaos}

\underline{$a=0.05$}. Within the window $a\in[a_1,0.0508)$ (Fig. \ref{fig2} (b)), as the bifurcation diagram shows, there exists a pair of symmetric stable cycles (red and black in Fig. \ref{fig6}) coexisting with a pair of symmetric chaotic attractors (blue and light-brown). Similar numerical approach to the one discussed in Subsection \ref{self-hid}, shows that for $a=0.05$ and $X_1^*(2.8828,2.1909,-2.8828)$, there exists a pair of stable cycles, $SEC_1$ and $SEC_2$ (red and black plot, Fig. \ref{fig6} (a)) and of chaotic attractors $SECH_1$ and $SECH_2$ (blue and light-brown).

Consider the lattices $B_0$ (Figs. \ref{fig6} (b) and (c)) and $B_1$ (Figs. \ref{fig6} (d) and (e)) in the planar sections through $X_0^*$ and $X_1^*$. As for the case $a=0.052$, $X_0^*$ belongs to a separatrix between attraction basins of $SEC_1$ and $SEC_2$ (Fig. \ref{fig6} (c)) and, therefore, $SEC_{1,2}$ are \emph{self-exited attractors}. The zoom in Fig. \ref{fig6} (e) shows the fact that for this value of $a$, $X_1^*$ also belongs to the separatrix of attraction basins of $SEC_1$ and $SECH_1$ (red and blue respectyively, Fig. \ref{fig6} (d), (e)), confers the status of self-excited characteristic of these attractors.

Even by considering that attraction basins of $SECH_{1,2}$ do not intersect $X_0^*$ (blue and light-brown in Fig. \ref{fig6} (c)), they can be hidden, because of the position of $X_1^*$, (red in Fig. \ref{fig6} (e)) $SECH_{1,2}$ are actually not hidden.

Note that with an extremely small perturbation of $a$, $\Delta a=2e-3$, from the previous case of $a=0.052$, the attraction basins suffered significant modifications around $X_1^*$, so that the hidden characteristic vanished.

The attractors $SEC_{1,2}$ have $MLE=-5.4e-004$ and, therefore, $SEC_{1,2}$ are stable cycles. For $H_{1,2}$, $MLE=0.0057>0$, i.e. these attractors are chaotic.

\subsection{Coexisting self-excited cycles and hidden cycles}\label{particula}

\underline{$a=0.0509$}. The zoomed region $D$ defined for $a\in[0.0503,0.0515]$ (Fig. \ref{fig2} (b)) contains merged periodic and chaotic thin windows. Within these interesting windows, one can identify the coexistence of several types of possible periodic attractors. The uncertainty relates to thickness of these windows, to the six decimals of some coefficients of the system and also to the errors introduced by the numerical integration. Also, the regularity seems to be influenced by the close proximity of the periodic windows with thin chaotic bands.

 To increase the precision of the integration, in the build-in Matlab procedure ode45, the option $options = odeset('RelTol',1e-8)$ is used (the default relative tolerance $RelTol$ of ode45 integrator, $0.001$, cannot separate, in a reasonable time integration interval $T_{max}$, the chaotic transients from the stable cycles).

For $a=0.0509$, because of the mentioned utilized $options$, the equilibria $X_{1,2}^*$ become $X_{1,2}(\pm2.9032,2.2064,\mp2.9032)$, and there exist two pairs of stable cycles:  \emph{self-excited cycles} $SEC_1$ and $SEC_2$ (red and black in Fig.\ref{fig7} (a)) and \emph{hidden cycles} $HC_1$ and $HC_{2}$ (blue and light-brown). The presumed stability can be deduced from the time series in Fig. \ref{fig7} (b), while the type of attractors can be deduced as before from the sections through equilibria $X_0^*$, the lattice $B_0$ through $X_0^*$ (Fig. \ref{fig7} (c)) and $B_1$ through $X_1^*$ (Fig. \ref{fig7} (d)).

Again, a small perturbation of the bifurcation parameter with only $\Delta a=9e-4$ (from $a=0.05$ to $a=0.0509$), generates a modification in the attraction basins containing $X_1^*$ (compare Fig. \ref{fig7} (d) with Fig. \ref{fig6} (d)), so that the hiddenness property appeared again.

For both pairs of cycles, $MLE=-4.25e-4$.

\section{FO variant of the system \eqref{3}}
Probably the most interesting characteristic of this system is the coexistence of attractors in the space of the fractional-order systems in the sense that for some fractional order $q$ and fixed parameter $a$, there exist two symmetric pairs of coexisting chaotic and stable cycles attractors.

The commensurate FO variant of the system is

\begin{equation}\label{33}
\begin{array}
[l]{l}
D_*^q x_1=ax_2+0.01x_1(1-x_2^2),\\
D_*^q x_2=0.031847(x_1+x_3),\\
D_*^q x_3=    0.19x_1-0.25x_2,
\end{array}
\end{equation}
where $D_*^q$ denotes the Caputo differential operator of order $q\in(0,1)$ with starting point 0 (see e.g. \cite{10,21})

\[
D_*^q=\frac{1}{\Gamma(\lceil q\rceil-q)} \int_0^t(t-\tau)^{\lceil q\rceil-q-1}D^{\lceil q\rceil}x(\tau)d \tau,
\]
where $\lceil \cdot \rceil$ denotes the ceiling function that rounds up to the next integer and $D^{\lceil q\rceil}$ represents the standard differential operator of order $\lceil q \rceil\in\mathbb{N}$. Due to the use of Caputo's operator, the initial conditions can be taken as for the integer-order case.

\begin{lemma} For $a\in[0,0.2]$, equilibria $X_0^*$ and $X_{1,2}^*$ are unstable.
\end{lemma}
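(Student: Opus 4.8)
The plan is to rely on the Matignon stability criterion for commensurate fractional-order systems: an equilibrium of \eqref{33} is asymptotically stable if and only if every eigenvalue $\lambda$ of the Jacobian evaluated at that equilibrium satisfies $|\arg(\lambda)| > q\pi/2$, and it is unstable as soon as a single eigenvalue violates this, i.e. $|\arg(\lambda)| < q\pi/2$. The decisive observation is that the Jacobian of the Caputo system \eqref{33} coincides, entry by entry, with the matrix $J$ computed in the previous lemma, so the eigenvalue data obtained there can be reused directly and no new spectral computation is needed.

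First I would dispose of $X_0^*$. From the previous lemma its spectrum consists of one real positive eigenvalue $\lambda_1>0$ together with a complex pair $\lambda_{2,3}$ with $\Re(\lambda_{2,3})<0$. Since $\arg(\lambda_1)=0$, we have $|\arg(\lambda_1)|=0<q\pi/2$ for every $q\in(0,1)$ and every $a\in[0,0.2]$. Hence the Matignon condition fails at $X_0^*$ unconditionally, and $X_0^*$ is unstable; this half of the lemma is immediate and independent of $q$.

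The case $X_{1,2}^*$ is the substantive one. There the real eigenvalue is negative, so $|\arg(\lambda_1)|=\pi$ trivially satisfies the stability inequality; instability must therefore come from the complex pair $\lambda_{2,3}=\alpha\pm i\beta$ with $\alpha=\Re(\lambda_{2,3})>0$, for which $|\arg(\lambda_{2,3})|=\arctan(\beta/\alpha)\in(0,\pi/2)$. Instability holds precisely when this angle drops below $q\pi/2$, equivalently when $q$ exceeds the critical order $q^*(a)=(2/\pi)\arctan(\beta/\alpha)$. I would therefore, in the same numerical spirit as the previous lemma, tabulate or plot $(2/\pi)\,|\arg(\lambda_{2,3})|$ as a function of $a$ over $[0,0.2]$ and confront it with the fractional orders $q$ used later for \eqref{33}, thereby exhibiting $|\arg(\lambda_{2,3})|<q\pi/2$ throughout.

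The main obstacle I anticipate is exactly this $q$-dependence: unlike $X_0^*$, the equilibria $X_{1,2}^*$ are not unstable for arbitrary $q\in(0,1)$, since a sufficiently small order would push $q\pi/2$ below $\arctan(\beta/\alpha)$ and restore stability. The claim is thus only meaningful relative to the orders $q$ actually considered for the fractional system, and the proof must make the threshold $q^*(a)$ explicit (at least numerically) and verify that the chosen $q$ lie above it uniformly in $a\in[0,0.2]$.
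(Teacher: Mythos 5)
Your proposal is correct and follows essentially the same route as the paper: both apply the Matignon-type criterion (the paper phrases it as the instability measure $\iota = q - 2\alpha_{\min}/\pi$ failing to be strictly negative) to the same Jacobian eigenvalues already computed in the integer-order lemma, and both settle the resulting inequality numerically over $a\in[0,0.2]$. Your two refinements --- that $X_0^*$ is unstable for \emph{every} $q\in(0,1)$ because its positive real eigenvalue has argument zero, and that instability of $X_{1,2}^*$ genuinely requires $q$ to exceed a threshold $q^*(a)=(2/\pi)\arctan(\beta/\alpha)$ --- are both correct and in fact sharper than the paper, whose figure is computed only at $q=0.9995$ while the lemma statement omits any mention of $q$.
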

\begin{proof}
In order to study the stability of equilibria $X_0^*$ and $X_{1,2}^*$ of the system \eqref{33}, denote $\alpha_{min}=min\{|\alpha _i|\}$, for $i=0,1,2$, where $\alpha_i$ are the arguments of the eigenvalues. The stability theorem of equilibria of a FO system can be stated in the following practical form

\begin{theorem}\cite{tava,zece}
An equilibrium point $X^*$ is asymptotically stable if and only if the instability measure
\[
\iota=q-2\frac{\alpha_{min}}{\pi}
\]
is strict negative.
\end{theorem}

If $\iota \leq 0$ and the critical eigenvalues satisfying $\iota=0$  have geometric multiplicity one, $X^*$ is stable.\footnote{The geometric multiplicity represents the dimension of the eigenspace of corresponding eigenvalues.}

The graphs of $\iota$ for all equilibria, with $a\in[0,0.2]$, are presented in Fig. \ref{fig77}. As can be seen $\iota>0$ and, therefore, equilibria are unstable.
\end{proof}

Consider next $a=0.05$. The bifurcation diagram with respect the fractional order $q$ is presented in Fig. \ref{fig8} (a). As can be seen, at the right of the bifurcation diagram, there exists a FO coexisting window where two symmetric pairs of attractors coexist.

\begin{remark}
Generally for FO systems, attractors coexisting phenomenon is searched for a fixed fractional order $q$ in the parameter bifurcation space, by searching different initial conditions for some fixed parameter value. For this system of FO the coexistence of attractors can be found also in the fractional order $q$ space, by searching for initial conditions for a fixed parameter bifurcation.
\end{remark}

Note that, as for continuous integer-order systems and contrary to FO difference systems, where the chaos strongness decreases for increasing $q$ (see e.g. \cite{11}), for this system of FO, the chaos strongness increases with $q$ tending to 1.

Because the inherent time history of the ABM method, a deep numerical analysis to detect hidden attractors is tedious and time consuming. Therefore, the quality of the obtained attractors is determined by several empirical tests.

\noindent \underline{$q=0.9995$}, a value within the coexisting window (Fig. \ref{fig8} (a)). Equilibria $X_{1,2}^*$ are the same as for the IO case, for $a=0.05$, namely $X_1^*(2.8828,2.1909,-2.8828)$. The two symmetric pairs of FO-attractors are presented in Figs. \ref{fig8} (b), (c) and Figs. \ref{fig8} (d), (e) (phase portraits and time series, respectively). As can be seen, each pair of FO-attractors is composed by one \emph{self-excited cycle}, $SEC_1$ or $SEC_2$ (red and black respectively) and one \emph{hidden chaotic attractor} $H_{1}$ or $H_{2}$ (blue and light-brown, respectively).

Now, the cycles have $MLE=5e-4$ while for the chaotic attractors $MLE=0.0056$.

Therefore, for a given parameter value $a$, the FO system \eqref{33} admits at least one FO coexisting window in the space of fractional order $q$, with two symmetric pairs of coexisting attractors. Other potential FO-coexisting windows, not examined here, are at the bifurcation points, e.q. at about $a=0.985$.

This important characteristic indicates that for some given fractional order $q$ and parameter $a$, there could be two different systems.

\section{Numerical settings}

Choosing a suitable level of precision for computations is a real challenging task.
The numerical integrator for \eqref{3} in the IO case is the matlab ode45 with implicit absolute and relative error tolerances. For the particular case of $a=0.0509$ (Subsection \ref{particula}), errors have been reduced by using $options = odeset('RelTol',1e-8)$.

The maximal time used interval is $T_{max}=5000-8000$, depending on the lengths of discarded transients. Due to inherent numerical errors, larger values for $T_{max}$ could lead to invalid results.

To integrate numerically the system \eqref{33} of FO, the ABM method \cite{21} is utilized.

The algorithms used to determine the finite time MLEs are the Benettin-Wolf algorithm for the IO case, and the algorithm presented in \cite{dn} for the FO case.

The attractors data are presented in Table \ref{pill}. Beside the initial conditions presented in Table \ref{pill}, points $(\pm1,\pm1,\pm1)$ and $(\pm1,\pm1,\mp1)$ can be used as initial conditions.
\begin{table}
\small
$\begin{array}{ccccc}
& a & x_{0} & \text{Attractor} & \text{Figure} \\
\hline
& 0.052 &
\begin{array}{c}
(\pm 0.0720,0,0),~(\pm 1,\pm 1,\pm 1) \\
(\pm 0.250,0,0),~(\pm 1,\pm 1,\mp 1)%
\end{array}
&
\begin{array}{c}
H_{1,2} \\
SECH_{1,2}%
\end{array}
& \text{Figs. 3,4} \\
\addlinespace[.2cm]
IO & 0.05 &
\begin{array}{c}
(\pm 0.150,0,0),(\pm 1,\pm 1,\pm 1) \\
(\pm 0.350,0,0),~(\pm 1,\pm 1,\mp 1)%
\end{array}
&
\begin{array}{c}
SEC_{1,2} \\
SECH_{1,2}%
\end{array}
& \text{Fig. } 6 \\
\addlinespace[.2cm]
& 0.0509 &
\begin{array}{c}
(\pm 0.150,0,0),~(\pm 1,\pm 1,\pm 1) \\
(\pm 0.280,0,0)~(\pm 1,\pm 1,\mp 1)%
\end{array}
&
\begin{array}{c}
SEC_{1,2} \\
HC_{1,2}%
\end{array}
& \text{Fig. } 7 \\
\addlinespace[.1cm]
\hline
FO & 0.05 &
\begin{array}{c}
(\pm 1,\pm 1,\pm 1) \\
(\pm 1,\pm ,\mp 1)%
\end{array}
&
\begin{array}{c}
SEC_{1,2} \\
H_{1,2}%
\end{array}
& \text{Fig. } 9\\
\hline
\end{array}$
\caption{Attractors data}\label{pill}
\end{table}

Note that, because of the relatively large number of decimals of some system coefficients and also due the inherent numerical errors, the 4-digits coordinates of $X_{1,2}^*$ (as approximations of the exact $X_{1,2}^*$) could also be taken as initial conditions, for cases with neighborhoods of $X_{1,2}^*$.

\section*{Conclusion and Discussion}
In this paper several type of coexisting attractors of an economic system of IO and FO are presented. Considering one of the several parameters as bifurcation parameter, the bifurcation diagram revealed windows where hidden and self-excited attractors coexist.

First, it is shown numerically that equilibria are unstable for both IO and FO cases, after which the exploration within neighborhoods of these points is investigated. The experiments, realized with the matlab routine ode45 for the IO case and with the ABM method for the FO case, show that the considered attractors fit the definition of hidden or self-excited attractors.

The characteristics of chaotic or regular attractors are revealed, after transients removed, by phase portraits, time series and maximal Lyapunov exponents.

For the bifurcation diagram in Fig. \ref{fig1}, the four used initial conditions are $(\pm1,\pm1,\pm1)$ and $\pm1,\pm1,\mp1)$, respectively. These initial conditions are proved to be valid, besides those mentioned in the text, for all considered cases (Table \ref{pill}). Nevertheless, to note that generally this is a way to draw bifurcation diagrams are made for fixed initial conditions, it is not a rigorous method, since the attraction basins be modified while the bifurcation parameter varies. Thus some searched attraction basins, such as hidden attractors, might modify once the parameter is varied and, in this way, hidden attractors can ``hide'' from this searching method. Indeed, one can consider as initial approximations of unstable equilibria (if any), whose expressions depend on parameter, which will lead certainly to all self-excited attractors (if any). But, the chance to identify in this case hidden attractors diminishes significantly. Moreover, sudden changes in the bifurcations branches may indicate bifurcations when some branches may become unstable, but also they might be due to the mentioned dependence of the attraction basins with the parameter modification, when the bifurcation diagram is incomplete. For example, for this system of IO, for values of $a$ close to $a=0.05$, with relative small perturbations of $a$, of order of $1e-3$ or even of $1e-4$ (case $a=0.05$ and $a=0.0509$), the attraction basins change their properties related to hiddenness and self-excited.

A coexisting window in the space of fractional order indicates that analyzing a FO system by following only the bifurcation in the parameter space, without some analysis of the bifurcation diagram in the fractional order space, could be insufficient to identify the entire spectrum of system dynamics.

An important problem, for this system, is a deeper analysis of his FO variant.

The coexistence of attractors in the case of this system is influenced by the presence of the saddles on the separatrix of attraction basins. For all these cases, when the unstable manifolds of the saddle could separate the attraction basins (see the cases in Fig. \ref{fig4} (c), Figs. \ref{fig6} (c) and (e) and Fig. \ref{fig7} (c)), this problem could be analyzed further as a possible sufficient criterion for non-existence of hidden attractors.

Because the variable $x_3$ represents the foreign capital inflow \cite{doi}, the presence of hidden attractors could represent an important phenomenon from the economic point of view. In fact under some circumstances such as initial conditions, the dynamics of the system could cover some (hidden) behavior. Also, counterintuitively, due to the existence of hidden attractors, increasing (but also decreasing) the $x_3$ variable is not necessarily related with the system economical stability.

\newpage{\pagestyle{empty}\cleardoublepage}

\newpage{\pagestyle{empty}\cleardoublepage}

\newpage{\pagestyle{empty}\cleardoublepage}

\begin{figure}
\begin{center}
\includegraphics[scale=0.5]{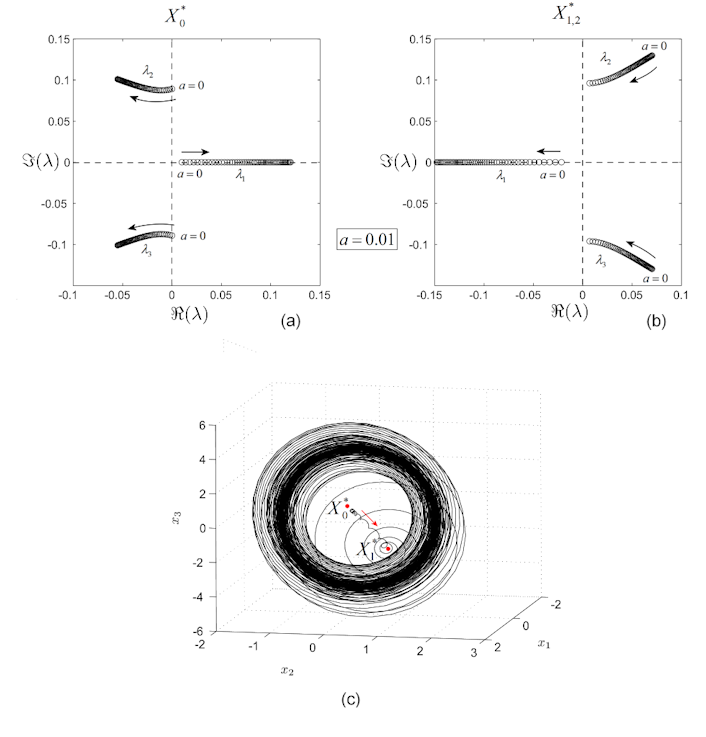}
\caption{(a) Locus of eigenvalues of $X_0^*$; (b) Locus of eigenvalues of $X_{1,2}^*$; (c) Heteroclinic connection between $X_0^*$ and $X_1^*$ for $a=0.001$.}
\label{fig1}
\end{center}
\end{figure}

\begin{figure}
\begin{center}
\includegraphics[scale=0.75]{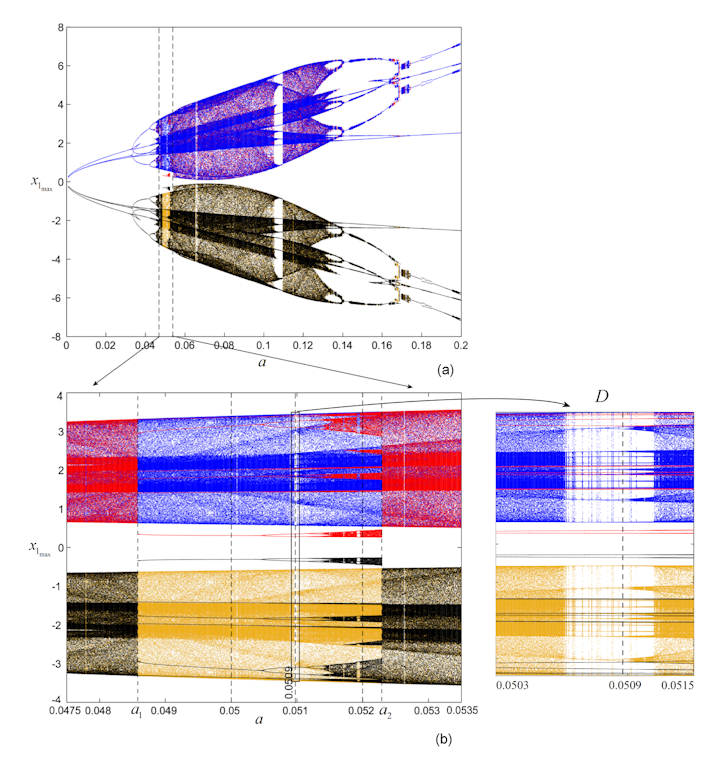}
\caption{(a) Bifurcation diagram of $x_1$ of the system \eqref{3} of IO for $a\in[0,2]$; (b) Two successive zooms. }
\label{fig2}
\end{center}
\end{figure}

\begin{figure}
\begin{center}
\includegraphics[scale=0.75]{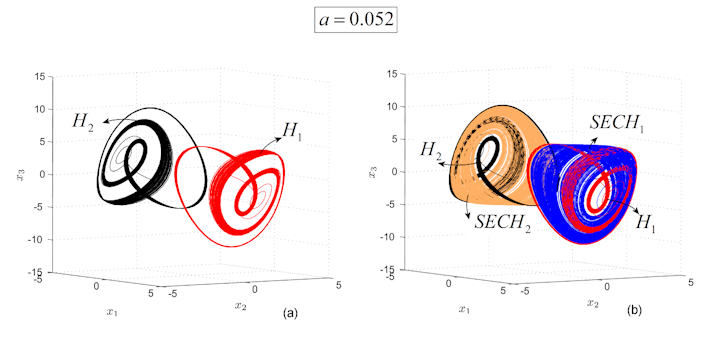}
\caption{Coexisting hidden chaotic attractors and self-excited attractors of the IO system, for $a=0.052$; (a) Hidden chaotic attractors $H_1$ (red plot) and $H_2$ (black plot) for initial conditions $x_0=(\mp0.072,0,0)$ respectively; (b) Self-excited chaotic attractors $SECH_1$ (blue) and $SECH_2$ (light brown) for $x_0=(\mp0.014,0,0)$ respectively.}
\label{fig3}
\end{center}
\end{figure}

\begin{figure}
\begin{center}
\includegraphics[scale=0.5]{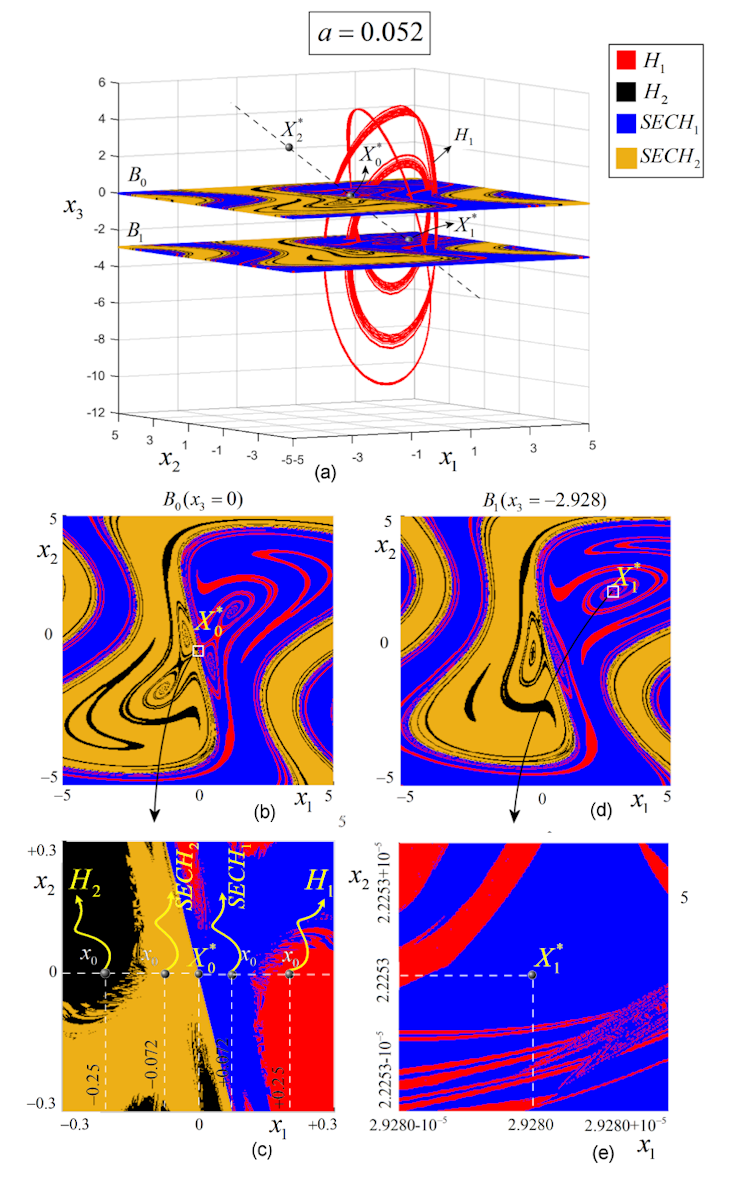}
\caption{Attraction basins of the chaotic attractors of the IO system, for $a=0.052$. Red and black correspond to the hidden attractors ($H_1$ and $H_2$, respectively), while blue and light-brown correspond to self-excited attractors ($SECH_1$ and $SECH_2$, respectively); (a) Three-dimensional view of two planar horizontal sections through $H_1$ and equilibrium $X_0^*$ (plane $B_0$, $x_3=0$) and also through equilibrium $X_1^*$ (plane $B_1$, $x_3=-2.9280$). Both planes are designed as lattices: $B_{0,1}=[-5,5]\times [-5,5]$; (b) View of $B_0$; (c) Rectangular zoomed region centered at $X_0^*$; (d) View of $B_1$; Rectangular zoomed region centered at $X_1^*$; Points $x_0$ represents initial conditions, while scrolled arrows show to which attractors initial conditions tend.}
\label{fig4}
\end{center}
\end{figure}

\begin{figure}
\begin{center}
\includegraphics[scale=0.5]{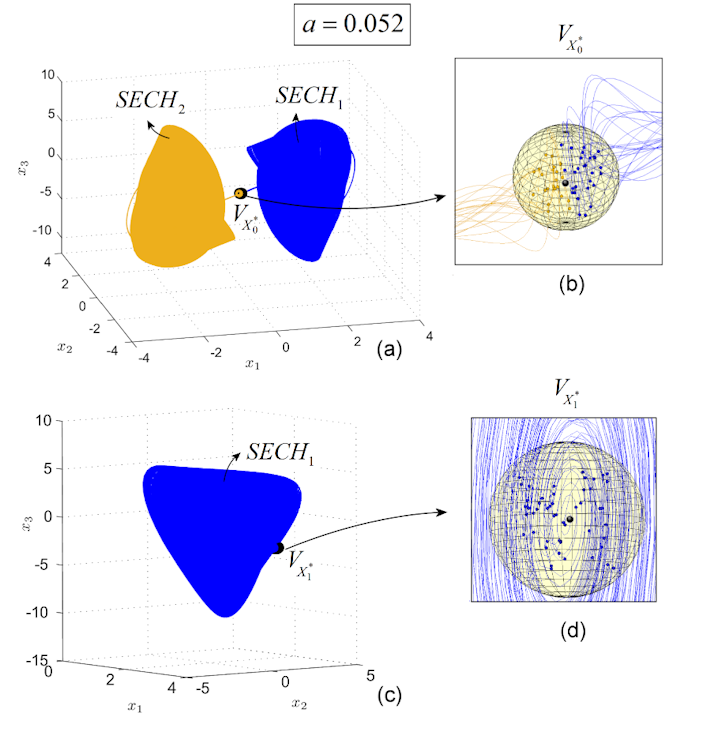}
\caption{Three-dimensional spherical neighborhoods of equilibrium $X_0^*$ and $X_1^*$ of the IO system, for $a=0.052$; (a) Neighborhood $V_{X_0^*}$ of $X_0^*$; (b) Zoom of the neighborhood $V_{X_0^*}$; (c) Neighborhood $V_{X_1^*}$ of $X_1^*$; (d) Zoom of the neighborhood $V_{X_1^*}$. }
\label{fig5}
\end{center}
\end{figure}

\begin{figure}
\begin{center}
\includegraphics[scale=0.65]{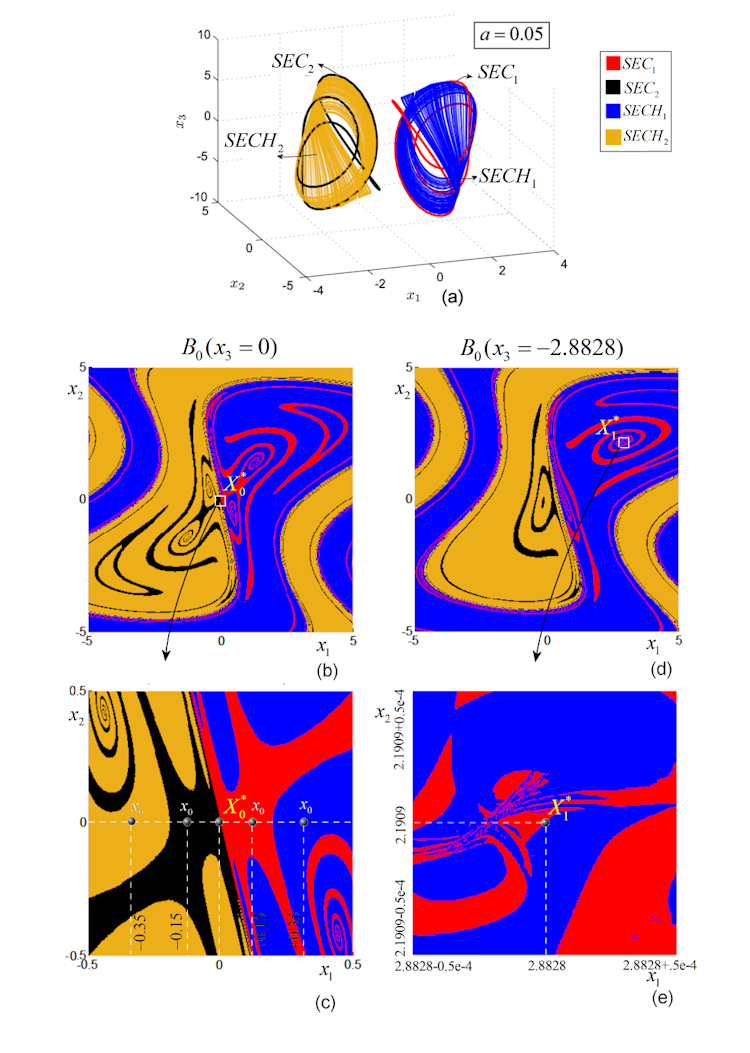}
\caption{Self-excited cycles $SEC_{1,2}$ and self-excited chaotic attractors $SECH_{1,2}$ of the IO system, for $a=0.05$; (a) Three-dimensional phase plot; (b) View of lattice $B_0$ of the planar section with plane $x_3=0$ containing equilibrium $X_0^*$; (c) Zoom of a neighborhood of $X_0^*$; (d) View of lattice $B_1$ of the planar section with plane $x_3=-2.8828$ containing equilibrium $X_1^*$; (e) Zoom of a neighborhood of $X_1^*$; Points $x_0$ represents initial conditions.}
\label{fig6}
\end{center}
\end{figure}

\begin{figure}
\begin{center}
\includegraphics[scale=0.65]{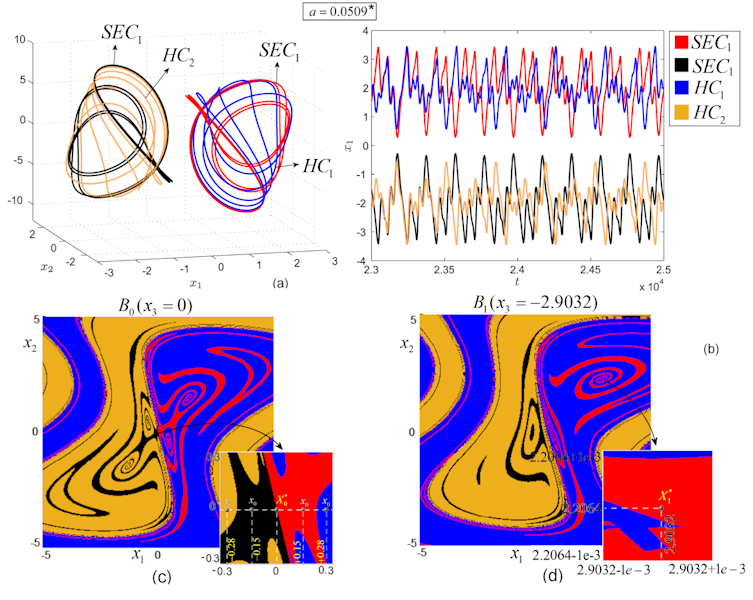}
\caption{Self-excited cycles $SEC_{1,2}$ (red and blue plot) and hidden cycles $HC_{1,2}$ (black and light-brown plot) of the IO system, for $a=0.0509$; The star indicates that the numerical method is improved in this case; (a) Three-dimensional phase plot; (b) Time series; (c) View of lattice $B_0$ of the planar section with plane $x_3=0$ containing equilibrium $X_0^*$ and a zoomed region of $X_0^*$; (d) View of lattice $B_1$ of the planar section with plane $x_3=-2.9032$ containing equilibrium $X_1^*$ and a zoomed region of $X_1^*$.}
\label{fig7}
\end{center}
\end{figure}

\begin{figure}
\begin{center}
\includegraphics[scale=0.5]{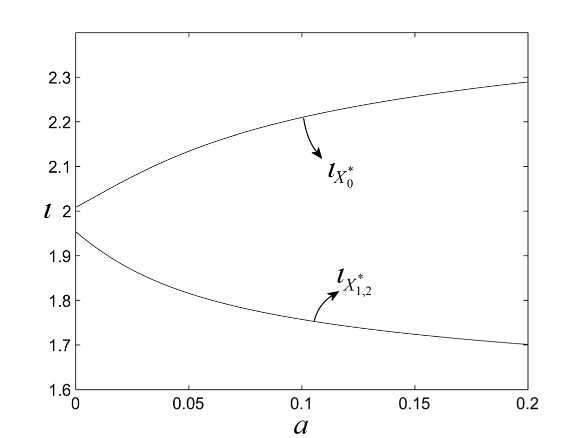}
\caption{The instability measure $\iota$ of equilibria $X_0$ and $X_{1,2}^*$, for the FO system with $q=0.9995$, as function of $a$.}
\label{fig77}
\end{center}
\end{figure}

\begin{figure}
\begin{center}
\includegraphics[scale=0.65]{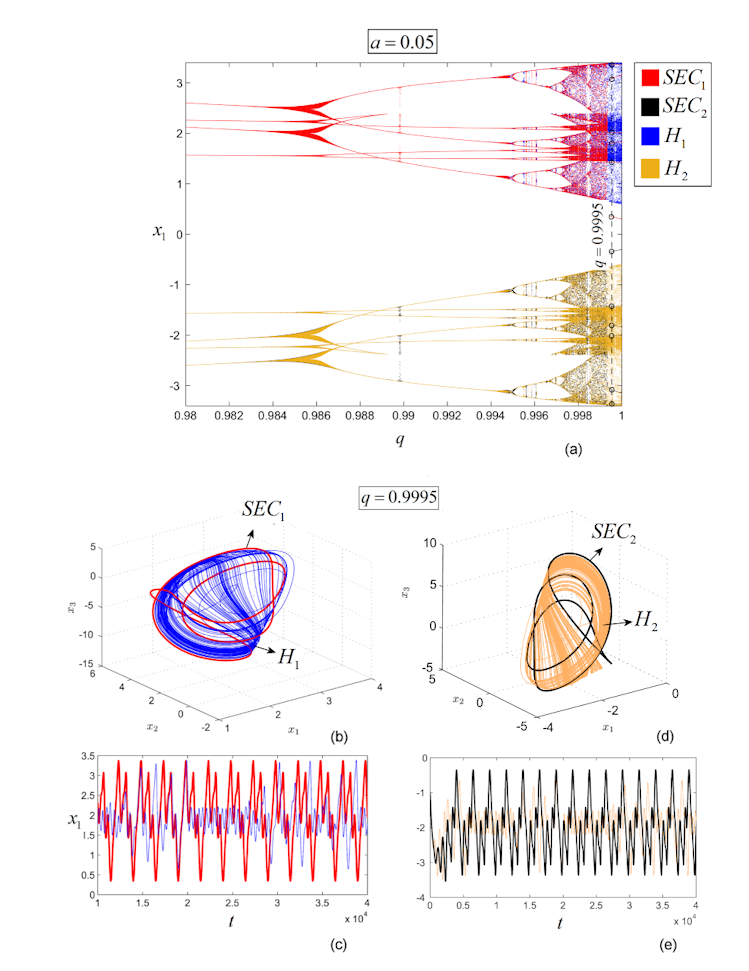}
\caption{The FO system for $a=0.05$ and $q=0.9995$; (a) Bifurcation diagram for $q\in[0.98,1)$; (b) Self-excited stable cycle $SEC_{1}$ and hidden chaotic attractor $H_1$; (c) Self-excited stable cycle $SEC_{2}$ and hidden chaotic attractor $H_2$; (d)Time series for attractors $SEC_1$ and $H_1$; (e) Time series for attractors $SEC_2$ and $H_2$.}
\label{fig8}
\end{center}
\end{figure}

\end{document}